\documentclass[10pt, conference]{IEEEtran}
\IEEEoverridecommandlockouts
\usepackage[english]{babel}
\usepackage{blindtext}
\usepackage{algorithm}
\usepackage{algorithmic}
\usepackage{graphicx}
\usepackage{subfigure}
\usepackage{amsmath}
\usepackage{amssymb}
\usepackage{cases}
\usepackage{setspace}
\usepackage{multirow}
\newtheorem{theorem}{\textbf{Theorem}}
\newtheorem{definition}{\textbf{Definition}}

%\doublespacing
%\onecolumn
%\normalsize
 \hyphenation{op-tical net-works
semi-conduc-tor IEEEtran}
\begin{document}
\title{Estimating the Distance Between Macro Base Station and Users in Heterogeneous Networks}
\author{\authorblockN{Lin Zhang$^{*\dag}$, Wenli Zhou$^{*}$, Wanbin Tang$^{*}$, Gang Wu$^{*}$, and Zhi Chen$^{*}$ \\ $*$ NCL, University of Electronic Science and Technology of China, Chengdu, China\\ $\dag$ Science and Technology on Information Transmission and Dissemination \\ in Communication Networks Lab., China
}
%\authorblockN{Wenli Zhou, Lin Zhang, Guodong Zhao, and Zhi Chen}
\thanks{This work is supported by Scinece and Technology on Information Transmission and Dissemination in Communication Networks Laboratory.}
%\thanks{The corresponding author is Lin Zhang and the contact email is linzhang1913@gmail.com.}}
}
\maketitle

\thispagestyle{empty}

\begin{abstract}
In underlay heterogeneous networks (HetNets), the distance between a macro base station (MBS) and a macro user (MU) is crucial for a small-cell based station (SBS) to control the interference to the MU and achieve the coexistence. To obtain the distance between the MBS and the MU, the SBS needs a backhaul link from the macro system, such that the macro system is able to transmit the information of the distance to the SBS through the backhaul link. However, there may not exist any backhaul link from the macro system to the SBS in practical situations. Thus, it is challenging for the SBS to obtain the distance. To deal with this issue, we propose a median based (MB) estimator for the SBS to obtain the distance between the MBS and the MU without any backhaul link. Numerical results show that the estimation error of the MB estimator can be as small as $4\%$.
\end{abstract}

%\begin{keywords}
%Cognitive radio, Channel Gain, Maximum likelihood, Median, Estimation accuracy.
%
%\end{keywords}

%\pagestyle{empty}
%\newpage
\section{Introduction}

The \emph{heterogeneous network} (HetNet) is a promising candidate to provide flexible wireless accesses for future wireless communications \cite{Survey_Two_tire_1}. Within a HetNet, a \emph{macro base station} (MBS) is expected to provide wide coverage of users in the macro cell. Meanwhile, a \emph{small-cell base station} (SBS) in the macro cell coexists with the MBS and is responsible for providing high data rate services for the users in the small cell. An effective way to achieve the coexistence between the small cell and the macro cell is underlay HetNet \cite{Survey_Two_tire_2}, \cite{Survey_Two_tire_3}. In the underlay HetNet, the SBS is allowed to access the macro frequency band to enhance the small-cell throughput, provided that the interference from the SBS to a \emph{macro user} (MU) is well controlled.

To implement an underlay HetNet, the location information of the MU is important for the SBS to control its transmit power and manage the interference to the MU. Intuitively, if the MU is out of the SBS's coverage, the SBS is allowed to maximize the transmit power and achieve a high small-cell throughput. Otherwise, if the MU is in the SBS's coverage, the SBS has to carefully control the transmit power to avoid severe interference to the MU. To obtain the location information of the MU, the SBS needs a backhaul link from the macro system, such that the macro system is able to transmit the location information of the MU to the SBS through the backhaul link \cite{Backhaul_1, Backhaul_2, Backhaul_3}. However, there may not exist any backhaul link from the macro system to the SBS in practical situations. Thus, it is challenging for the SBS to obtain the location information of the MU for the underlay HetNet.

Recently, the underlay HetNet has been extensively studied \cite{Two_tire_1, Two_tire_2, Two_tire_3}. Instead of utilizing the instantaneous location information of the MU to manage the interference, these literature exploited the stochastic geometry information of the MU and enabled the SBS to satisfy an average interference constraint. In particular, the \emph{Poisson point process} (PPP) model is widely adopted to enhance the access probability of the SBS, while guaranteeing a small outage probability of the MU. However, this approach can only provide limited access probability of the SBS and compromises the small-cell throughput.

To deal with this issue, we intend to estimate the distance between the MBS and the MU, such that the SBS can exploit the distance to achieve the underlay HetNet. Briefly, if the distance is too small or too large compared with the the distance between the MBS and the SBS, the SBS's transmission will not interfere with the MU since the MU is almost surely beyond the SBS's coverage in this case. If the distance between the MBS and the MU is comparable with the distance between the MBS and the SBS, the SBS has to carefully control its transmit power to avoid severe interference to the MU since it is likely that the MU is in the SBS's coverage.

In fact, it is possible for the SBS to obtain the distance between the MBS and the MU without any backhaul link from the macro system. In principle, if the MBS is transmitting data to the MU with a target SNR, the transmitted signal is designed based on the distance between the MBS and the MU. In particular, if the distance is small, the MBS is able to satisfy the target SNR with a small transmit power. Otherwise, the MBS increases its transmit power to achieve the target SNR. In other words, the transmitted signal from the MBS contains some information of the distance between the MBS and the MU. Thus, the SBS can estimate the distance by sensing the transmitted signal from the MBS.

In this paper, we first study the relation between the MBS signal and the distance from the MBS to the MU. With this relation, we enable the SBS to sense the MBS signal and develop a \emph{median based} (MB) estimator to obtain the distance. Numerical results show that the estimation error of the MB estimator can be as small as $4\%$.

\section{System Model}
Fig. {\ref{systemmodel}} provides a HetNet model, which consists of a MBS, a MU, and a SBS. In particular, the MBS is transmitting data to the MU. Meanwhile, the SBS intends to estimate the distance $d_0$ between the MBS and the MU for the underlay HetNet. In what follows, we present the channel model and the signal model, respectively.

         \begin{figure}[t!]
            \centering
            \includegraphics[scale=0.55]{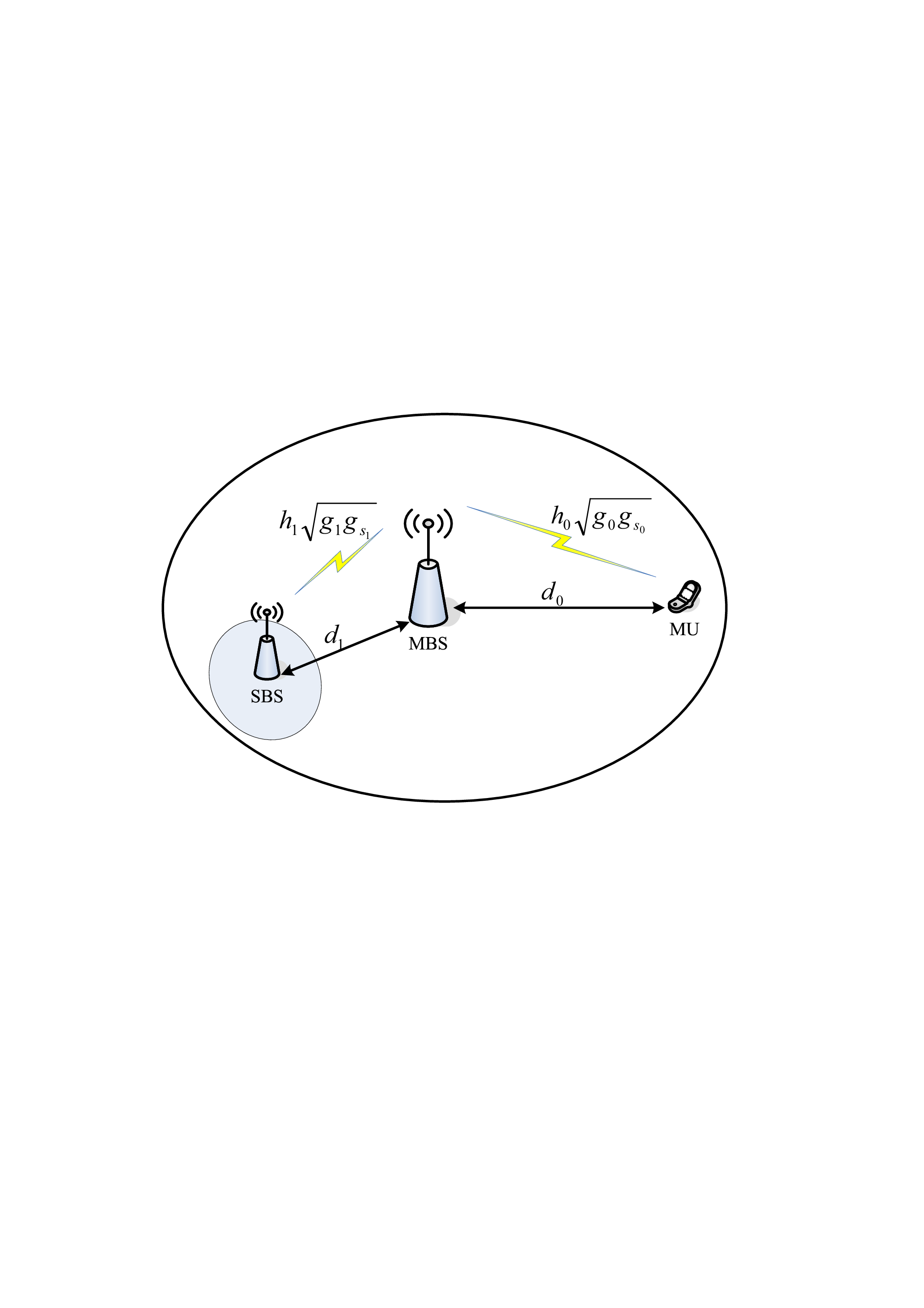}
            \caption{HetNet model, where a MBS is transmitting data to the MU in a HetNet. Meanwhile, the SBS intends to estimate the distance $d_0$ between the MBS and the MU to achieve the underlay HetNet. }
            \label{systemmodel}
        \end{figure}

\subsection{Channel Model}

We denote $h_0$ ($h_1$), $g_{s_0}$ ($g_{s_1}$), and $g_0$ ($g_1$) as the fading, the shadowing, and the path-loss coefficients between the MBS and the MU (SBS), respectively. Then, the channel between the MBS and the MU (SBS) is ${h_0}\sqrt {{g_0}{g_{s_0}}}$ (${h_1}\sqrt {{g_1}{g_{s_1}}}$). In particular, $|h_q|$ ($q=0, \ 1$) follows a Rayleigh distribution with unit mean. $g_{s_q}$ ($q=0, \ 1$) follows a log-normal distribution with variance $\sigma_s^2$. $g_q$ ($q=0, \ 1$) is determined by the path-loss model. If we adopt the path-loss model \cite{3GPP_channel_model}
\begin{equation}
    P_l(d_q)=128 + 37.6\log_{10}(d_q), \ \ \text{for} \ \ d_q \geq
    0.035 \ \text{km},
    \label{Path_loss_model}
\end{equation}
where $d_q$ is the distance between two transceivers, $g_q$ can be expressed as
\begin{equation}
    g_q=10^{-12.8} d_q^{-3.76}, \ \ \text{for} \ \ d_q \geq 0.035 \ \text{km}.
    \label{g_i}
\end{equation}

For illustrations, we provide the channel model in Fig. \ref{Channel_model}, where time axis is divided into blocks and each block consists of multiple subblocks. In particular, $g_q$ remains constant all the time with a given $d_q$, $g_{s_q}$ ($q=0, \ 1$) remains constant within each block ($i$) and varies independently among different blocks, and $h_q$ ($q=0, \ 1$) remains constant within each subblock ($i, j$) and varies among different subblocks.

         \begin{figure}[t!]
            \centering
            \includegraphics[scale=0.45]{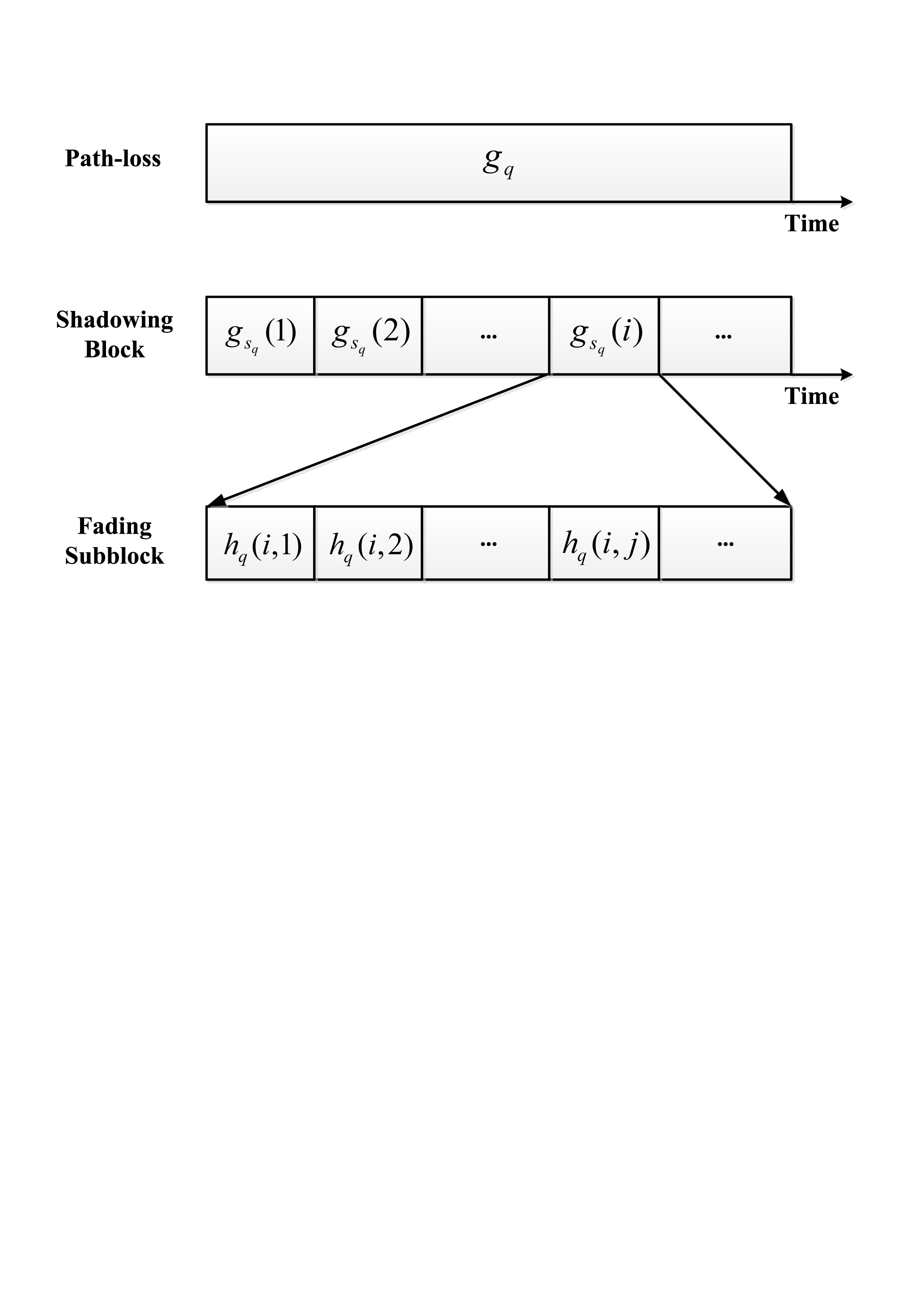}
            \caption{Channel model, where time axis is divided into blocks and each blocks consists of multiple subblocks. In particular, $g_q$ remains constant all the time with a given $d_q$, $g_{s_q}$ ($q=0, \ 1$) remains constant within each block ($i$) and varies independently among different blocks, and $h_q$ ($q=0, \ 1$) remains constant within each subblock ($i, j$) and varies among different subblocks.}
            \label{Channel_model}
        \end{figure}
\subsection{Signal Model}

\subsubsection{Signal model from the MBS to the MU}
Denote $x_{m}$ as the MBS signal with unit power, i.e., $|x_{0}|^2=1$. If the MBS transmits the signal with power $p_0$, the received signal at the MU is
 \begin{equation}
 y_0(i,j)= h_0(i,j)\sqrt {g_0g_{s_0}(i)p_0(i,j)} x_0(i,j) + n_0(i,j),
 \label{y_p}
 \end{equation}
where $(i)$ denotes the index of the $i$th block, $(i,j)$ represents the index of the $j$th subblock in the $i$th block, $n_0$ represents the AWGN at the MU with zero mean and variance $\sigma^2$. Then, the SNR of the received signal at the MU is
 \begin{equation}
\gamma_{0}(i,j)= \frac{|h_0(i,j)|^2g_0g_{s_0}(i)p_0(i,j)}{\sigma^2}.
 \label{gamma_p}
 \end{equation}

We further consider that the MBS and the MU adopt \emph{close loop
power control} (CLPC) to provide QoS guaranteed wireless communication \cite{Rui}. That is, the MBS automatically adjusts its transmit power to meet a certain target SNR $\gamma_T$ at the MU. Then, MBS's transmit power is
\begin{equation}
p_0(i,j)= \frac{{\gamma _T\sigma^2}}{{|h_0(i,j)|^2{g_0g_{s_0}(i)}}}.
\label{p_0}
\end{equation}

\subsubsection{Signal model from the MBS to the SBS}
In the meantime, the received MBS signal at the SBS is
  \begin{equation}
y_1(i,j) = h_1(i,j)\sqrt {{g_1}{g_{s_1}(i)}{p_0(i,j)}} {x_{0}}(i,j) + n_1(i,j),
 \label{y_c}
 \end{equation}
where $n_1$ is the AWGN at the SBS with zero mean and variance $\sigma^2$. Then, the SNR of the received MBS signal at the SBS is
 \begin{equation}
\gamma_1(i,j) = \frac{{|h_1(i,j)|^2{g_1}{g_{s_1}(i)}{p_{0}(i,j)}}}{{\sigma^2}}.
 \label{gamma_c}
 \end{equation}

By Substituting (\ref{g_i}) and (\ref{p_0}) into (\ref{gamma_c}), $\gamma_1(i,j)$ in (\ref{gamma_c}) can be rewritten as
\begin{equation}
{\gamma _{1}}(i,j) = \frac{{{\gamma _T}{d_1^{-3.76}}}}{{{d_0^{-3.76}}}}\frac{g_{s_1}(i)}{g_{s_0}(i)} \frac{{|h_1(i,j)|^2}}{{|h_0(i,j)|^2}}.
\label{gamma_c_1}
\end{equation}

\section{Median Based (MB) Estimator}
In this section, we will develop a MB estimator to obtain the distance $d_0$ between the MBS and the MU. In what follows, we provide the basic principle of the estimator followed by the estimator design.

\subsection{Basic Principle}

From (\ref{gamma_c_1}), each SNR of the received MBS signal at the SBS is highly related to the distance $d_0$. Then, it is possible for the SBS to measure the SNR of the received MBS signal and estimate $d_0$. However, it is difficult to obtain $d_0$ directly. This is because, each SNR is also affected by random Rayleigh fadings and shadowing attenuations, and varies independently among different blocks and/or subblocks. Alternatively, the SBS can measure different SNRs of MBS signals in multiple blocks and utilize the distribution knowledge of the Rayleigh fadings and shadowing attenuations to estimate $d_0$.

Next, we will first calculate the \emph{cumulative density function} (CDF) of the SNR at the SBS. Then, we study the relation between the CDF of the SNR at the SBS and the distance $d_0$. Finally, we develop the MB estimator.

\subsection{CDF of the SNR at the SBS}
Removing the time index of the SNR at the SBS in (\ref{gamma_c_1}) and rewriting the SNR in dB, we have
\begin{align}
\nonumber
{\gamma _{1,dB}} = {\gamma _{T,dB}} + 37.6{\log _{10}}\left( \frac{d_0}{d_1} \right)+ \Theta _r+\Theta _s,
 \label{gamma_c_dB}
\end{align}
where ${\Theta _r} = 10{\log _{10}}\left( {\frac{{h_1^2}}{{h_0^2}}} \right)$ and ${\Theta _s} = 10{\log _{10}}\left( {{g_{s_1}}} \right) - 10{\log _{10}}\left( {{g_{s_0}}} \right)$. Next, we first calculate the \emph{probability density function} (PDF) of $\Theta _r$ and $\Theta _s$, and then obtain the CDF of $\gamma _{1,dB}$.

On one hand, both $|h_0|$ amd $|h_1|$ follow a Rayleigh distribution with unit mean, the CDF of $\phi=|h_1|^2/|h_0|^2$ is \cite{Joint_PDF}
\begin{equation}
F_{\Phi}(\phi) = \frac{{\phi}}{{{{1 + \phi}}}}.
\label{phi_cdf}
\end{equation}

Then, the CDF of $\Theta _r=10\log_{10}(\phi)$ is
\begin{align} \nonumber
F_{\Theta _r}(\theta _r) =& \Pr\left\{10\log_{10}(\phi) \leq \theta _r \right\}\\ \nonumber
=& \Pr\left\{\phi \leq 10^{\frac{\theta _r}{10}} \right\}\\
=& F_{\Phi}\left( 10^{\frac{\theta _r}{10}}\right).
\label{Theta_r_cdf}
\end{align}

Substituting (\ref{phi_cdf}) into (\ref{Theta_r_cdf}), we have the CDF of $\Theta _r$ as
\begin{equation}
F_{\Theta _r}(\theta _r) = \frac{1}{1 + 10^{-\frac{\theta _r}{10}}}.
\label{Theta_1_cdf}
\end{equation}

Taking the derivation of $F_{\Theta _r}(\theta _r)$, we have the PDF of $\Theta _r$ as
\begin{equation}\label{a5}
{f_{{\Theta _r}}}\left( {{\theta _r}} \right) = \frac{{\ln 10 \cdot {{10}^{ - \frac{{{\theta _r}}}{{10}}}}}}{{10{{\left( {1 + {{10}^{ - \frac{{{\theta _r}}}{{10}}}}} \right)}^2}}}.
\end{equation}

On the other hand, both $g_{s_0}$ and $g_{s_1}$ follow a log-normal distribution with variance $\sigma_s^2$. Then, it is straightforward to obtain that $\Theta_s$ follows a normal distribution with zero means and variance $2\sigma_s^2$. Thus, the PDF of $\Theta_s$ is
\begin{equation}\label{Theta_2_pdf}
{f_{{\Theta _s}}}\left( {{\theta _s}} \right) = \frac{1}{{\sqrt {4\pi   \sigma _s^2 } }}{e^{ - \frac{{{{ {{\theta _s}} }^2}}}{{4 {\sigma _s^2}}}}}.
\end{equation}

Note that, the CDF of $\gamma_{1, dB}$ is
\begin{align}
\label{gamma_c_dB_CDF_1}
\nonumber
& F_{\Gamma _{1,dB}}(\gamma_{1, dB}) \\ \nonumber
= &\Pr\left\{{\gamma _{T,dB}} + 37.6{\log _{10}}\left( \frac{d_0}{d_1} \right)+ {\Theta _r} + {\Theta _s}\leq \gamma_{1, dB}\right\}\\
= &\Pr\left\{{\Theta _r}\! + {\Theta _s}\leq \gamma_{1, dB}\!-{\gamma _{T,dB}}\!- 37.6{\log _{10}}\left( \frac{d_0}{d_1} \right)\right\}.
\end{align}

If we denote $m(\gamma_{1, dB})=\gamma_{1, dB}-{\gamma _{T,dB}}- 37.6{\log _{10}}\left( \frac{d_0}{d_1} \right)$, the CDF of $\gamma_{1, dB}$ in (\ref{gamma_c_dB_CDF_1}) can be calculated as
\begin{align}
\label{gamma_c_dB_CDF_2}
\nonumber
 F_{\Gamma _{1,dB}}(\gamma_{1, dB})=& \Pr\left\{{\Theta _r} + {\Theta _s}\leq m(\gamma_{1, dB})\right\}\\ \nonumber
=&\int_{ - \infty }^\infty  \int_{ - \infty }^{m(\gamma_{1, dB}) - {\theta _s}}\!\!f_{\Theta _s}\left(\theta _s \right) \!\!f_{\Theta _r}\left( \theta _r \right)d\theta _rd\theta _s\\
\!=\!&\int_{ - \infty }^\infty \!\! f_{\Theta _s}\left(\theta _s \right) F_{\Theta _r}\left(m(\gamma_{1, dB})\! - \! {\theta _s} \!\right)d\theta _s.
\end{align}

\subsection{Relations between the CDF of $\gamma_{1, dB}$ and the Distance $d_0$}
To begin with, we provide the definition of the median $x_{\frac{1}{2}}$ of a random variable $X$ as follows,

\begin{definition} For a random variable $X$ with CDF $F_X(x)$, $x\in \mathbb{R}$, if $x_{\frac{1}{2}}$ satisfies both $F_X(x_{\frac{1}{2}})= \Pr\{X\leq x_{\frac{1}{2}}\}=\frac{1}{2}$ and $1-F_X(x_{\frac{1}{2}})= \Pr\{X\geq x_{\frac{1}{2}}\}=\frac{1}{2}$, $x_{\frac{1}{2}}$ is defined as the median of the random variable $X$.
\end{definition}

Based on Definition 1, we can obtain the median $\gamma_{1, dB, \frac{1}{2}}$ of the random variable $\gamma_{1, dB}$ by letting $F_{\Gamma_{1,dB}}\left(\gamma_{1,dB}\right)$ in (\ref{gamma_c_dB_CDF_2}) be $\frac{1}{2}$, i.e.,
\begin{align}
\label{median_1}
\int_{ - \infty }^\infty  f_{\Theta _s}\left(\theta _s \right) F_{\Theta _r}\left(m(\gamma_{1, dB,\frac{1}{2}}) - {\theta _s} \right)d\theta _s=\frac{1}{2}.
\end{align}

Then, we have the following Theorem.

\begin{theorem}
The median $\gamma_{1, dB, \frac{1}{2}}$ enables $m(\gamma_{1, dB,\frac{1}{2}})=0$ to hold.
\end{theorem}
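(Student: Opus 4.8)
The plan is to read the left-hand side of (\ref{median_1}) as a function of the single real variable $m=m(\gamma_{1,dB})=\gamma_{1,dB}-\gamma_{T,dB}-37.6\log_{10}(d_0/d_1)$, namely $G(m)=\int_{-\infty}^{\infty}f_{\Theta_s}(\theta_s)\,F_{\Theta_r}(m-\theta_s)\,d\theta_s$, which by (\ref{gamma_c_dB_CDF_2}) is exactly the CDF of $\Theta_r+\Theta_s$ evaluated at $m$. The theorem then reduces to two facts: (i) $G(0)=\tfrac12$, and (ii) $G$ is strictly increasing, so that the solution of $G(m)=\tfrac12$ is unique and hence equal to $0$. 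Since $m(\cdot)$ is an increasing affine function of $\gamma_{1,dB}$, this in turn pins down $\gamma_{1,dB,\frac12}$ and forces $m(\gamma_{1,dB,\frac12})=0$.

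For (i) I would first note that $\Theta_r$ is symmetric about $0$: from (\ref{Theta_1_cdf}), $F_{\Theta_r}(-\theta_r)=\frac{1}{1+10^{\theta_r/10}}=1-\frac{1}{1+10^{-\theta_r/10}}=1-F_{\Theta_r}(\theta_r)$, which says $-\Theta_r\stackrel{d}{=}\Theta_r$. Likewise, by (\ref{Theta_2_pdf}), $\Theta_s$ is zero-mean Gaussian and therefore symmetric about $0$, and $\Theta_r,\Theta_s$ are independent (built from independent fading and shadowing). The sum of two independent random variables each symmetric about $0$ is symmetric about $0$, and a continuous variable symmetric about $0$ has $0$ as a median; hence $G(0)=\tfrac12$. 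If a self-contained calculation is wanted instead of invoking this lemma, I would write $G(0)=\int f_{\Theta_s}(\theta_s)F_{\Theta_r}(-\theta_s)\,d\theta_s=1-\int f_{\Theta_s}(\theta_s)F_{\Theta_r}(\theta_s)\,d\theta_s$ using $F_{\Theta_r}(-\theta_s)=1-F_{\Theta_r}(\theta_s)$, then substitute $\theta_s\mapsto-\theta_s$ and use that $f_{\Theta_s}$ is even to obtain that the integral $I=\int f_{\Theta_s}F_{\Theta_r}$ satisfies $I=1-I$, so $I=\tfrac12$ and $G(0)=\tfrac12$.

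For (ii) I would observe that $G$ is the CDF of $\Theta_r+\Theta_s$, whose density is the convolution $f_{\Theta_r}\ast f_{\Theta_s}$; this is strictly positive on all of $\mathbb{R}$ because $f_{\Theta_s}$ in (\ref{Theta_2_pdf}) is everywhere positive (equivalently, $F_{\Theta_r}$ is strictly increasing, so $m\mapsto F_{\Theta_r}(m-\theta_s)$ strictly increases pointwise and $f_{\Theta_s}>0$). Hence $G$ is strictly increasing and $G(m)=\tfrac12$ has the unique root $m=0$; combining with the definition of $m(\gamma_{1,dB})$ gives $m(\gamma_{1,dB,\frac12})=0$. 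The only mildly delicate point is making the uniqueness/strict-monotonicity argument airtight (and, in the fully rigorous version, checking that the median of $\gamma_{1,dB}$ is well defined in the sense of Definition 1); the computation that $G(0)=\tfrac12$ is short once the symmetry of $\Theta_r$ is extracted from (\ref{Theta_1_cdf}).
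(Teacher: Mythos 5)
Your proof is correct and its core computation --- showing $\int_{-\infty}^{\infty} f_{\Theta_s}(\theta_s)F_{\Theta_r}(-\theta_s)\,d\theta_s=\tfrac{1}{2}$ via the identity $F_{\Theta_r}(-x)=1-F_{\Theta_r}(x)$ together with the evenness of $f_{\Theta_s}$ and a sign-flip substitution --- is exactly the paper's argument in the Appendix, merely packaged as the single reflection $I=1-I$ rather than the paper's split into two half-line integrals. Your additional step (ii), strict monotonicity of the CDF of $\Theta_r+\Theta_s$ to ensure the median is unique and therefore must be the root $m=0$, is a point the paper silently omits, so your version is if anything slightly more complete.
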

\begin{proof}
The detailed proof of this Theorem is provided in the Appendix.
\end{proof}

Theorem 1 indicates that the relation between the median $\gamma_{1, dB, \frac{1}{2}}$ and the distance $d_0$ satisfies
\begin{align}
\label{median_2}
m(\gamma_{1, dB,\frac{1}{2}})\!=\!\gamma_{1, dB, \frac{1}{2}}\!-\!{\gamma _{T,dB}}\!- \!37.6{\log _{10}}\left( \frac{d_0}{d_1} \right)=0.
\end{align}
Thus, if $\gamma_{1, dB, \frac{1}{2}}$ is available at the SBS, $d_0$ can be directly calculated with (\ref{median_2}). However, $\gamma_{1, dB, \frac{1}{2}}$ is unknown to the SBS in practical situations. To deal with this issue, we will first estimate $\gamma_{1, dB, \frac{1}{2}}$ and then obtain the estimation of $d_0$ with (\ref{median_2}).

\subsection{Estimator Design}

We first give the definition of the sample median $x^s_{\frac{1}{2}}$ of a random variable $X$ as follows,

\begin{definition} For a random variable $X$ with samples $x_m$ ($1\leq m \leq M$), if $x^{s}_{\frac{1}{2}}$ satisfies both $\Pr\{x_m\leq x^{s}_{\frac{1}{2}}\}=\frac{1}{2}$ and $\Pr\{x_m \geq x^{s}_{\frac{1}{2}}\}=\frac{1}{2}$, $x^{s}_{\frac{1}{2}}$ is defined as the sample median of the random variable $X$.
\end{definition}

If the SBS observes MBS signals in $I$ blocks and measures $\gamma_{1, dB}$ of $J$ subblocks within each block, the SBS is able to measure $K=IJ$ independent samples of $\gamma_{1, dB}$, namely, $\gamma _{1, dB}(i,j)$ ($1\leq i\leq I$, $1\leq j\leq J$). In what follows, we will approximate the median $\gamma_{1, dB, \frac{1}{2}}$ with the sample median $\gamma^s_{1, dB, \frac{1}{2}}$ of these $K$ samples. Then, by substituting the approximated $\gamma_{1, dB, \frac{1}{2}}$ into (\ref{median_2}), we obtain the estimation of $d_0$.

To begin with, by sorting the $K$ samples in ascending order, the $K$ samples can be relabelled as $\bar{\gamma}_{1, dB}(k)$ ($1\leq k\leq K$), i.e., $\bar{\gamma} _{1, dB}(k_1) \leq \bar{\gamma} _{1, dB}(k_2)$ for $1\leq k_1 \leq k_2 \leq K$. Since the sample medians $\bar \gamma^s_{1, dB, \frac{1}{2}}$ of these $K$ samples for odd and even $K$ can be different, we will develop the MB estimator for odd and even $K$ separately.

 \subsubsection{For the case that $K$ is odd} When $K$ is odd, the sample median is $\gamma^s_{1, dB, \frac{1}{2}}=\bar \gamma _{1, dB}\left(\frac{K+1}{2}\right)$. Then, the median of $\gamma_{1, dB}$ can be approximated as
\begin{equation}
\gamma_{1, dB, \frac{1}{2}}\approx \bar \gamma _{1, dB}\left(\frac{K+1}{2}\right).
\label{approximate_odd}
\end{equation}

By substituting (\ref{approximate_odd}) into (\ref{median_2}), we have the MB estimator as
\begin{equation}
\begin{split}
{{\hat d}_{0}} = d_110^{\frac{\bar \gamma _{1, dB}\left(\frac{K+1}{2}\right)-\gamma _{T,dB}}{37.6}}.
\end{split}
\end{equation}

 \subsubsection{For the case that $K$ is even} When $K$ is even, the sample median is between $\bar \gamma_{1, dB}\left(\frac{K}{2}\right)$ and $\bar \gamma _{1, dB}\left(\frac{K}{2}+1\right)$. Then, the median of $\gamma_{1, dB}$ can be approximated as
\begin{equation}
\gamma_{1, dB, \frac{1}{2}}\approx \frac{\bar \gamma_{1, dB}\left(\frac{K}{2}\right)+\bar \gamma _{1, dB}\left(\frac{K}{2}+1\right)}{2}.
\label{approximate_even}
\end{equation}

By substituting (\ref{approximate_even}) into (\ref{median_2}), we have the MB estimator as
\begin{equation}
\begin{split}
{{\hat d}_{0}} = d_110^{\frac{\frac{\bar \gamma_{1, dB}\left(\frac{K}{2}\right)+\bar \gamma _{1, dB}\left(\frac{K}{2}+1\right)}{2}-\gamma _{T,dB}}{37.6}}.
\label{MB_estimator_even}
\end{split}
\end{equation}

Consequently, the MB estimator can be summarized as
\begin{equation}\label{MB_estimator}
{{\hat d}_{0}} \!=\! \left\{\!\! \begin{array}{l}
d_110^{\frac{\bar \gamma _{1, dB}\left(\frac{K+1}{2}\right)-\gamma _{T,dB}}{37.6}}, \quad \quad \quad \quad \ \ \text{for} \ K \ \text{is}\ \text{odd,}\\
d_110^{\frac{\frac{\bar \gamma_{1, dB}\left(\frac{K}{2}\right)+\bar \gamma _{1, dB}\left(\frac{K}{2}+1\right)}{2}-\gamma _{T,dB}}{37.6}}, \ \text{for} \ K \ \text{is} \ \text{even}.
\end{array} \right.
\end{equation}
From (\ref{MB_estimator}), the MB estimator $\hat{d}_{0}$ is determined by the target SNR $\gamma_{T, dB}$ at the MU, the distance $d_1$ between the MBS and the SBS, and the SNR $\gamma_{1, dB}$ of the MBS signal at the SBS. Note that, $\gamma_{T, dB}$ can be obtained by the SBS through observing the \emph{modulation and coding scheme} (MCS) of the MBS signal \cite{D_Tse}. $d_1$ is available at the SBS. $\gamma_{1, dB}$ is measured at the SBS and also known to the SBS. Therefore, the estimation of $d_0$ can be directly calculated with (\ref{MB_estimator}). In other words, the computational complexity of the MB estimator in (\ref{MB_estimator}) is $O(1)$.

\subsection{Estimation Performance Analysis}

In this part, we present the estimation performance analysis of the MB estimator in Theorem 1.

\begin{theorem} With the MB estimator in (\ref{MB_estimator}), $g_0$ can be bounded by
\begin{equation}
\!d_110^{\frac{\gamma _{c, dB}\left(\left\lceil \frac{IJ+1}{2}\right\rceil-1\right)-\gamma _{T,dB}}{37.6}} \!\!\leq \!d_0 \!\leq \!\!d_110^{\frac{\gamma _{c, dB}\left(\left\lfloor \frac{IJ+1}{2}\right\rfloor+1\right)-\gamma _{T,dB}}{37.6}}\!
\label{Theorem_2}
\end{equation}
with probability $\left(1-\left( {\frac{1}{2}} \right)^{\left[\frac{K}{2} + 1\right]}\right)^2$, where $\lceil x\rceil$ denotes the smallest integer that is no smaller than $x$, $\lfloor x \rfloor$ denotes the largest integer that is no larger than $x$, and $[x]$ rounds $x$ to the nearest integer.
\end{theorem}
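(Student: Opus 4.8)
The plan is to turn the two-sided bound on $d_0$ into a statement about the rank of the true median $\gamma_{1,dB,\frac{1}{2}}$ among the $K=IJ$ measured SNR samples, and then to control that rank by a coin-flipping (binomial) argument. \emph{Step 1 (reduction via Theorem~1).} Theorem~1 gives $m(\gamma_{1,dB,\frac{1}{2}})=0$, so by (\ref{median_2}) the true distance is $d_0=d_1\,10^{(\gamma_{1,dB,\frac{1}{2}}-\gamma_{T,dB})/37.6}$. The map $x\mapsto d_1\,10^{(x-\gamma_{T,dB})/37.6}$ is strictly increasing, so the lower and upper inequalities in (\ref{Theorem_2}) are equivalent, respectively, to $\bar\gamma_{1,dB}(r)\le\gamma_{1,dB,\frac{1}{2}}$ and $\gamma_{1,dB,\frac{1}{2}}\le\bar\gamma_{1,dB}(s)$, where $\bar\gamma_{1,dB}(\cdot)$ are the sorted samples and $r=\lceil\frac{K+1}{2}\rceil-1$, $s=\lfloor\frac{K+1}{2}\rfloor+1$. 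Thus it suffices to lower bound $\Pr\{\bar\gamma_{1,dB}(r)\le\gamma_{1,dB,\frac{1}{2}}\le\bar\gamma_{1,dB}(s)\}$.

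\emph{Step 2 (binomial picture).} Let $N$ be the number of samples $\gamma_{1,dB}(i,j)$ that do not exceed $\gamma_{1,dB,\frac{1}{2}}$. The $K$ samples are independent with a continuous distribution (a shift of the convolution of the densities (\ref{a5}) and (\ref{Theta_2_pdf})) whose median is $\gamma_{1,dB,\frac{1}{2}}$, so each sample lies below the median with probability $\frac{1}{2}$, ties have probability zero, and $N\sim\mathrm{Binomial}(K,\frac{1}{2})$. Unpacking the order statistics, $\{\bar\gamma_{1,dB}(r)\le\gamma_{1,dB,\frac{1}{2}}\}=\{N\ge r\}$ and $\{\gamma_{1,dB,\frac{1}{2}}\le\bar\gamma_{1,dB}(s)\}=\{N\le s-1\}$; equivalently, the lower bound fails exactly when the $K-r+1=[K/2+1]$ largest samples all exceed the median, and the upper bound fails exactly when the $[K/2+1]$ smallest samples all fall below it. (Verifying $K-r+1=[K/2+1]$ separately for odd and even $K$ absorbs the $\lceil\cdot\rceil,\lfloor\cdot\rfloor,[\cdot]$ bookkeeping.)

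\emph{Step 3 (one-sided bound and combination).} A fixed set of $[K/2+1]$ samples lies simultaneously on a prescribed side of the median with probability $(\frac{1}{2})^{[K/2+1]}$, so the event that the lower bound in (\ref{Theorem_2}) holds has probability at least $1-(\frac{1}{2})^{[K/2+1]}$, and by the symmetry of the median the same estimate holds for the event that the upper bound holds. Since the lower-bound failure is carried by the largest samples and the upper-bound failure by the smallest, I would treat the two one-sided events as independent and multiply, which produces the probability $(1-(\frac{1}{2})^{[K/2+1]})^2$ of (\ref{Theorem_2}); rewriting the order-statistic inequalities back in $d_0$-form completes the argument.

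The main obstacle is Step~3. The event ``the $[K/2+1]$ largest of $K$ independent samples all exceed the median'' is not literally an event on $[K/2+1]$ independent fair coins, because conditioning on being the extreme samples biases the sides, so pinning the one-sided probability down to $(\frac{1}{2})^{[K/2+1]}$ needs a genuine argument (for instance a pairing or a monotone coupling, or the precise counting structure the paper intends). Moreover the lower- and upper-tail failures are both functions of the single count $N$, so the product form $(1-(\frac{1}{2})^{[K/2+1]})^2$ reflects an independence idealization rather than an exact computation; making that step precise is where the work lies. Steps~1 and~2 are essentially bookkeeping once Theorem~1 is available.
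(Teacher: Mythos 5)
There is no proof of Theorem~2 in the paper to check you against --- the Appendix proves only Theorem~1 --- so your reconstruction is a reasonable guess at what the authors intended, and your Steps~1 and~2 are exactly right: by Theorem~1, $d_0=d_1 10^{(\gamma_{1,dB,\frac{1}{2}}-\gamma_{T,dB})/37.6}$, the claimed bound is equivalent to $\bar{\gamma}_{1,dB}(r)\le\gamma_{1,dB,\frac{1}{2}}\le\bar{\gamma}_{1,dB}(s)$ with $r=\lceil\frac{K+1}{2}\rceil-1$ and $s=\lfloor\frac{K+1}{2}\rfloor+1$, and (granting independence of the $K$ samples, which the block-constant shadowing actually spoils for $J>1$) this event is $\{r\le N\le s-1\}$ with $N\sim\mathrm{Binomial}(K,\tfrac{1}{2})$.

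The obstacle you flag in Step~3, however, is not a technicality awaiting "a genuine argument" --- it is fatal to the stated probability, and you should promote your caveat to the verdict. The failure event "the $K-r+1$ largest samples all exceed the median" is identical to "at least $K-r+1$ of the $K$ samples exceed the median," a central binomial tail, not the event that a \emph{prescribed} set of $K-r+1$ fair coins all land one way; its probability is $\Pr\{N\le r-1\}$, which for $r\approx K/2$ is close to $\tfrac{1}{2}$, not $2^{-[K/2+1]}$. Concretely, for even $K$ the two-sided event collapses to $\{N=K/2\}$, whose probability $\binom{K}{K/2}2^{-K}\sim\sqrt{2/(\pi K)}$ \emph{decreases} to zero, whereas the claimed $\bigl(1-2^{-[K/2+1]}\bigr)^2$ increases to one; at the paper's own illustrative $K=12$ the true coverage is $\binom{12}{6}2^{-12}\approx 0.23$ against the claimed $0.984$. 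No coupling or counting argument can close this gap for ranks adjacent to $K/2$; a guarantee of the advertised form requires order statistics far from the center, e.g.\ $\Pr\{\bar{\gamma}_{1,dB}(1)\le\gamma_{1,dB,\frac{1}{2}}\le\bar{\gamma}_{1,dB}(K)\}=1-2^{1-K}$, or more generally ranks $r,s$ chosen so that the binomial tails $\Pr\{N\le r-1\}$ and $\Pr\{N\ge s\}$ are each at most $2^{-[K/2+1]}$. So your framework is the right one, but the theorem as stated cannot be derived from it because its probability claim is false.
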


%\begin{proof}
%The detailed proof of this Theorem is provided in Appendix A.
%\end{proof}

Theorem 2 indicates that $d_0$ can be upper bounded and lower bounded by functions of the measured SNRs at the SBS with a certain probability. In particular, the probability is a function of the number of the sensed primary signals, i.e., $IJ$. For instance, we consider $I=12$ and $J=1$, the bounds of $d_0$ in (\ref{Theorem_2}) holds with probability larger than $98\%$. As $IJ$ increases, $d_0$ can be almost surely bounded with (\ref{Theorem_2}).
             \begin{figure}[t!]
            \centering
            \includegraphics[scale=0.4]{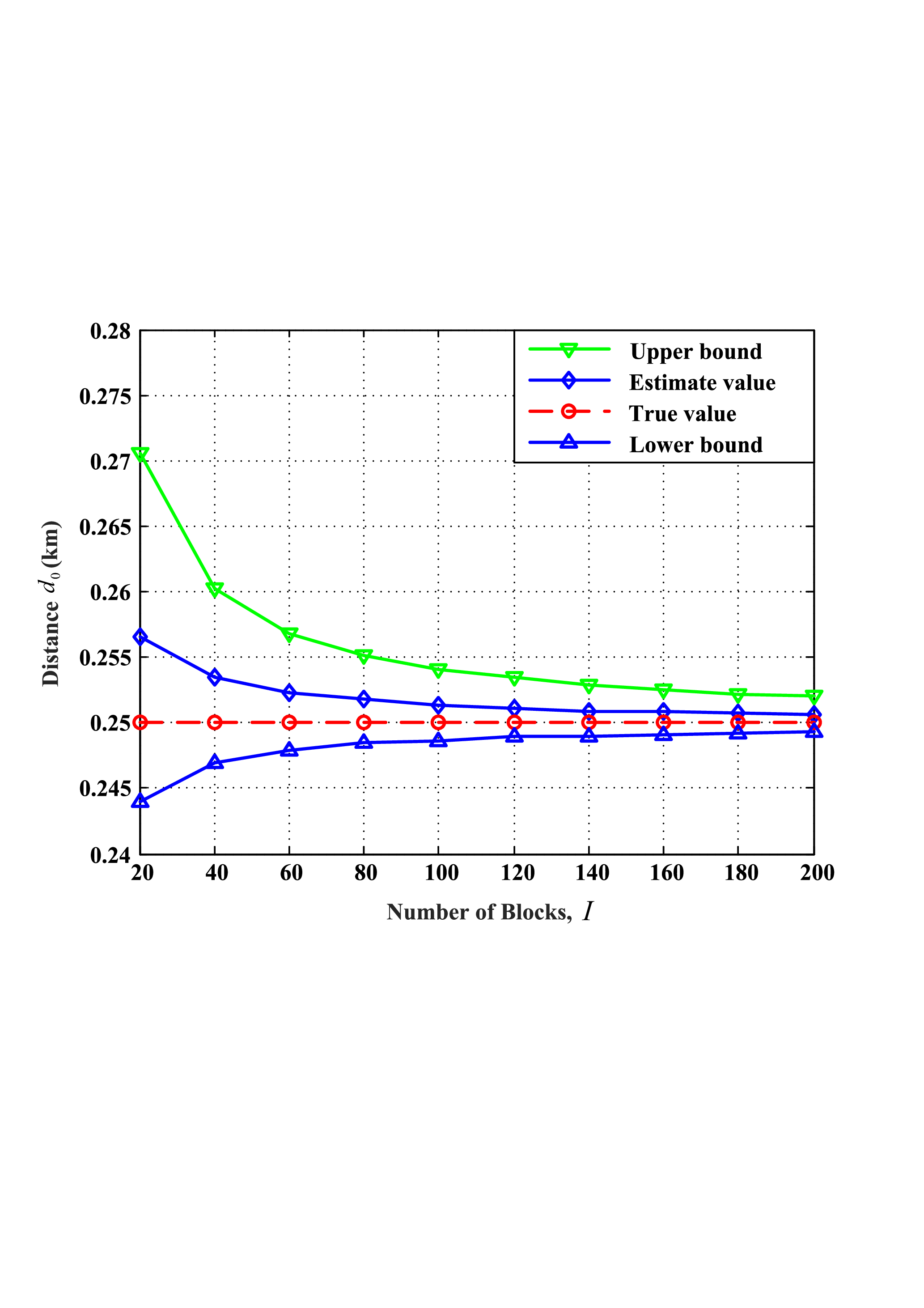}
            \caption{Estimation accuracy of the MB estimator versus the number of blocks, i.e., $I$. In particular, we set $J=1$.}
            \label{bound1}
        \end{figure}

In Fig. \ref{bound1}, we illustrate estimation accuracy of the MB estimator versus the number of blocks, i.e., $I$. In particular, we provide the true value of the channel gain $d_0$, the estimation $\hat{d}_0$ with the MB estimation, the upper bound and the lower bound in (\ref{Theorem_2}). Here, the distance between the MBS and the MU is $d_0=0.25$ km, and the distance between the MBS and the MU is $d_1=0.1$ km. From this figure, we observe that the estimation value $\hat{d}_0$ is strictly upper bounded and lower bounded by the results in Theorem 1. Besides, both the upper bound and the lower bound converge to $d_0$ and $\hat{d}_0$. This means a larger $I$ leads to a more accurate estimation as well as tighter upper and lower bounds. Furthermore, we observe that the estimated value $\hat{d}_0$ converges to the true value of the channel gain $d_0$ as $I$ increases.

\section{Numerical results}
In this section, we provide the numerical results to demonstrate the performance of the proposed MB estimator. Here, we adopt the system model as in Section II, where the radius of the MBS's coverage is $R=0.5$ km, the power of the AWGN $\sigma^2=-114$ dBm, the target SNR of the MU is $\gamma_{T}=10$ dB. Furthermore, $10^4$ Monte Carlo trails are conducted for each curve.
To begin with, we define the estimation error of $d_0$ as $\epsilon = |\frac{\hat d_0-d_0}{d_0}|$.

             \begin{figure}[t!]
            \centering
            \includegraphics[scale=0.43]{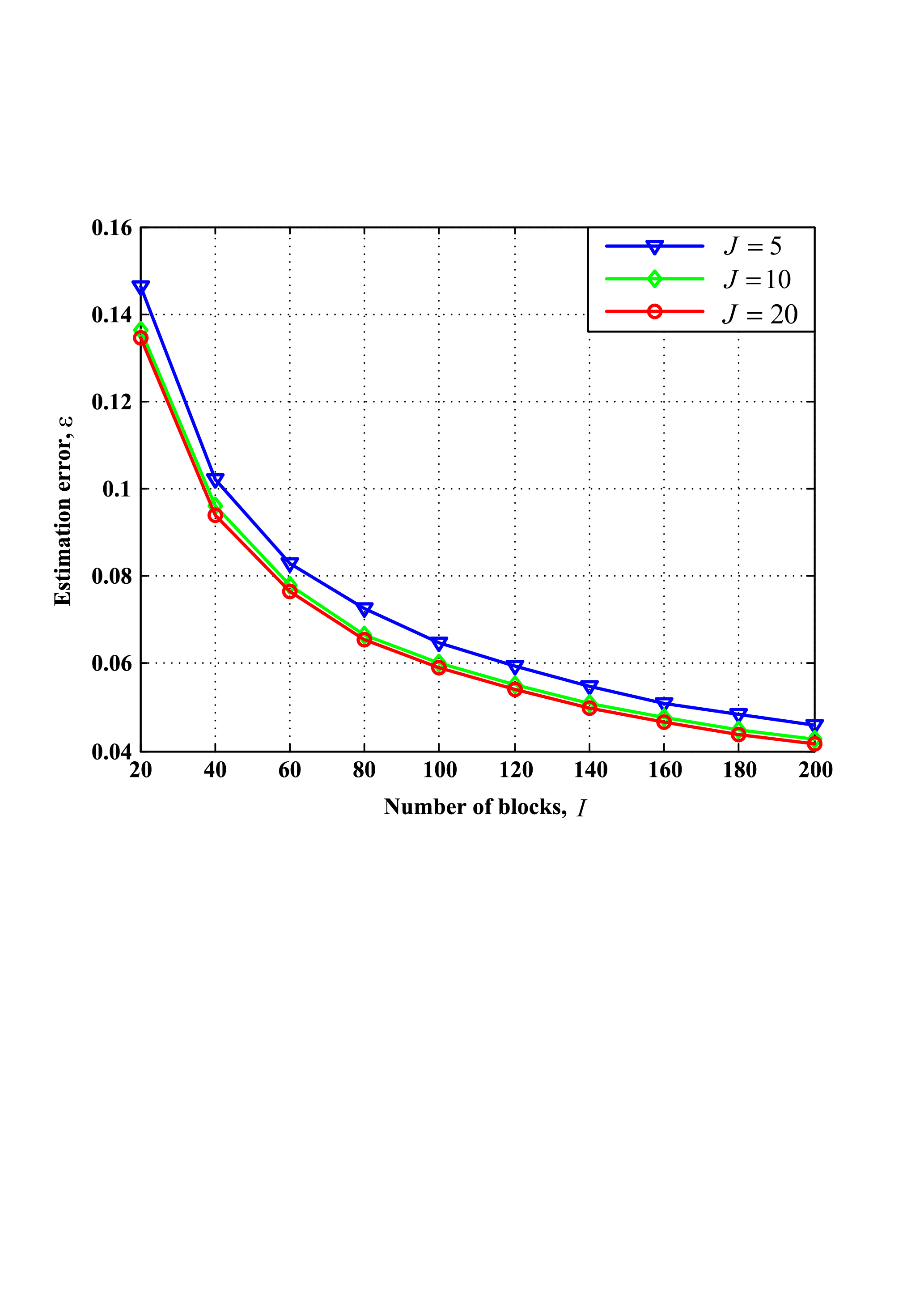}
            \caption{The estimation error with different $I$ and/or $J$. In particular, we set $d_0=0.25$ km and $d_1=0.1$ km.}
            \label{error_I}
        \end{figure}

Fig. \ref{error_I} investigates the impact of $K=IJ$ on the the estimation error. In particular, we set $d_0=0.25$ km and $d_1=0.1$ km. From this figure, the estimation error decreases as $J$ ($I$) grows. This is reasonable, a larger $J$ ($I$) leads to more SNR samples at the SBS and provides more accurate information of the median of the SNRs at the SBS. By using the relation between the median of the SNRs at the SBS and the distance $d_0$, the MB estimator is able to output a more accurate estimation of $d_0$. Besides, we observe that a small increase of $I$ leads to big jump of the estimation error, while a large increase of $I$ results in a slight decrease of the estimation error. This indicates that the estimation error is more sensitive to $I$ than $J$. Thus, It is an effective way to achieve a small estimation error by adopting a large $I$ and a reasonable $J$.

             \begin{figure}[t!]
            \centering
            \includegraphics[scale=0.43]{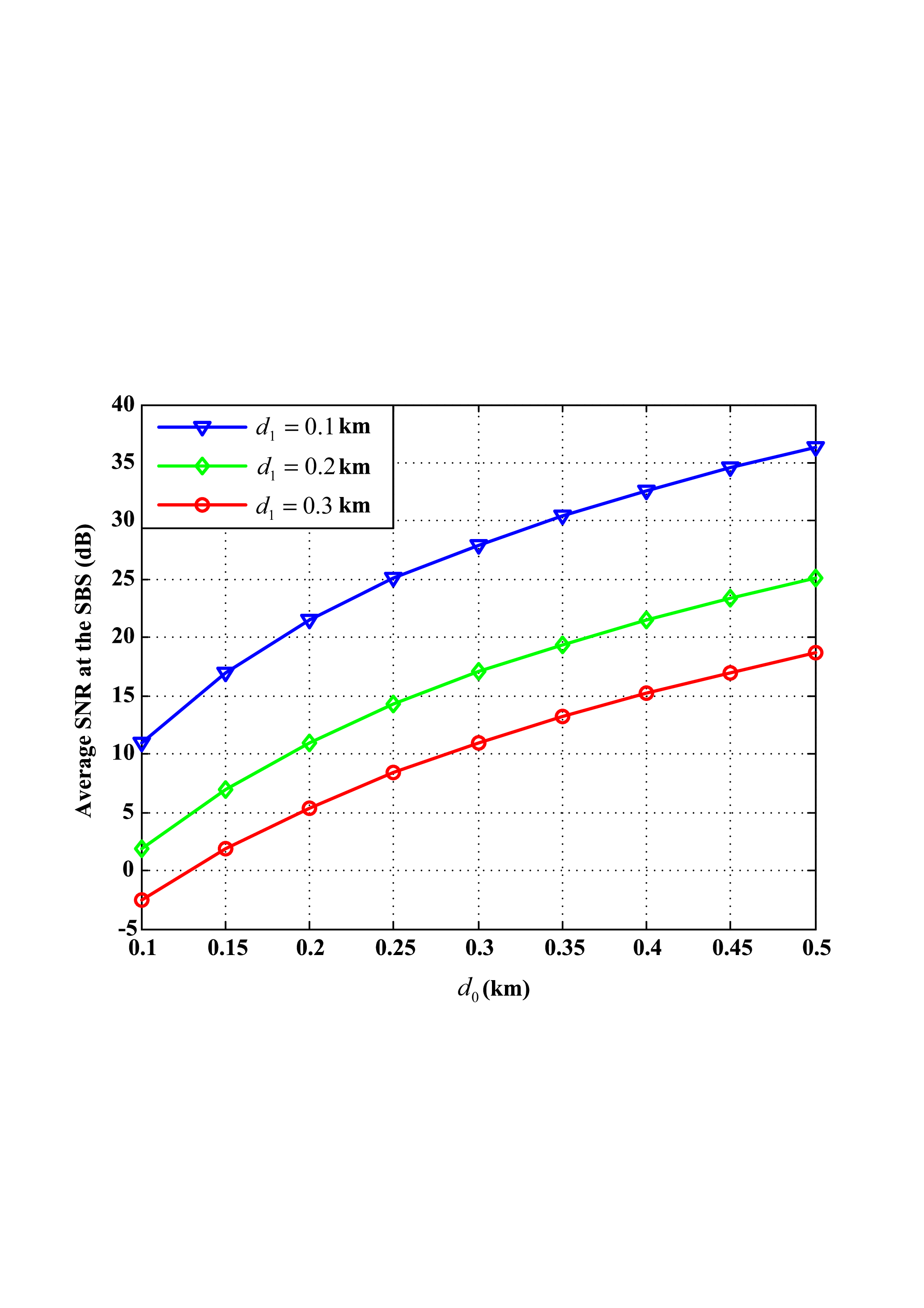}
            \caption{The average measured SNR at the SBS when the MU and/or the SBS are in different locations, i.e., different $d_0$ and/or $d_1$. In particular, we set $I=200$ and $J=20$.}
            \label{SNR}
        \end{figure}

Fig. \ref{SNR} provides the average measured SNRs at the SBS when the MU and/or the SBS are in different locations, i.e., different $d_0$ and/or $d_1$. In general, the SNR at the SBS increases as $d_0$ grows or $d_1$ decreases. On one hand, for a given target SNR of the received signals at the MU, a larger $d_0$ requires a larger transmit power at the MBS to satisfy the target SNR. This leads to a stronger received signal at the SBS and outputs a larger SNR. On the other hand, a smaller $d_1$ also enables the SBS to receive a stronger signal from the MBS and contributes to a larger SNR at the SBS.

             \begin{figure}[t!]
            \centering
            \includegraphics[scale=0.43]{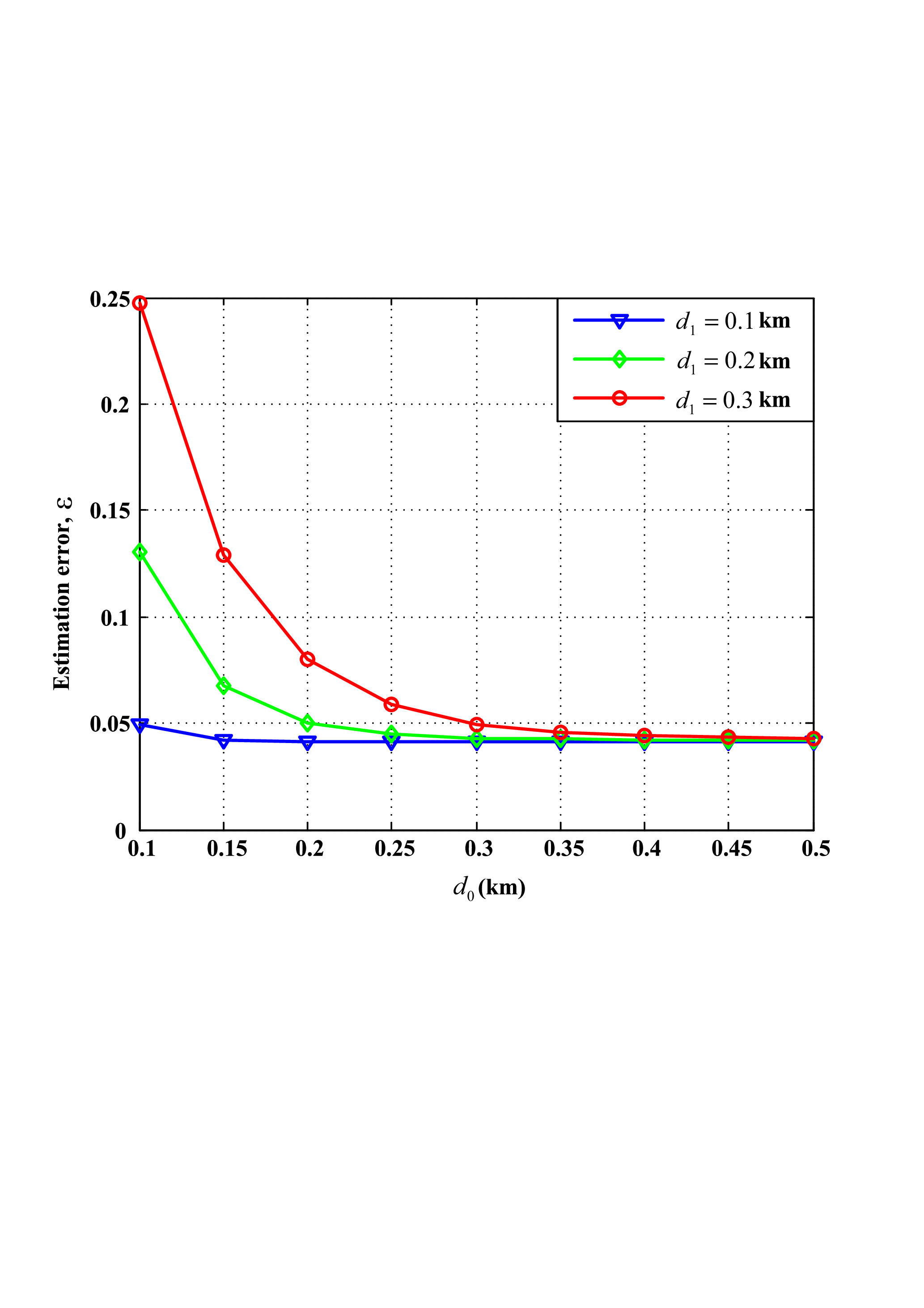}
            \caption{The estimation error when the MU and/or the SBS are in different locations, i.e., different $d_0$ and/or $d_1$. In particular, we set $I=200$ and $J=20$.}
            \label{error_d_0}
        \end{figure}

Fig. \ref{error_d_0} provides the estimation error when the MU and/or the SBS are in different locations, i.e., different $d_0$ and/or $d_1$. From this figure, the estimation error decreases as $d_0$ increases or $d_1$ decreases. Note that, both the increase of $d_0$ and the decrease of $d_1$ enhances the average SNR at the SBS from the results in Fig. \ref{SNR}. This reduces the measure error of each SNR at the SBS. By adopting these SNRs to estimate the distance $d_0$, the estimation error is also reduced. Besides, we observe that the estimation error converges to around $4\%$ as $d_0$ increases or $d_1$ decreases. In addition, by comparing Fig. \ref{SNR} and Fig. \ref{error_d_0}, we have the estimation error of $d_0$ with different average SNRs at the SBS as in Table I. In fact, the estimation error in Table I can be further reduced by increasing $I$ and $J$ from the results in Fig. \ref{error_I}. Therefore, we can select a proper $I$ and $J$ to obtain an acceptable estimation performance in practical situations.

\begin{table}[!hbp]
\caption{Estimation error of $d_0$ with different average SNRs at the SBS. }
\centering %\footnotesize
\begin{tabular}{|c|c|c|c|c|c|c|}
  \hline
  % after \\: \hline or \cline{col1-col2} \cline{col3-col4} ...
  Average $\gamma_{1, dB}$ & 0 & 5 & 10 & 15 & 20 & $\cdots$ \\
  \hline
  $\epsilon$ & 16\% & 8\% & 5\% & 4.5\% & 4\% & 4\% \\
  \hline
\end{tabular}
\end{table}

%This is because, as the average SNR at the SBS increases, the measure error of each SNR decreases. When the measure error of each SNR is small enough, the estimation error of $d_0$ caused by the measure error of each SNR can be neglected. Then, the estimation error of $d_0$ is only determined by the number of SNR samples at the SBS, i.e., $IJ$. Thus, to further reduce the estimation error of $d_0$, $I$ or $J$ has to be increased.

\section{Conclusions}
In this paper, we studied the coexistence problem between a macro cell and a small cell in an underlay HetNet. In particular, we proposed a MB estimator for the SBS to estimate the distance between the MBS and the MU. Different from the conventional approach, the MB estimator does not require any backhaul link from the macro system to the SBS. With the distance information, the SBS is able to manage the interference to the MU and achieve the coexistence. Numerical results showed that the estimation error of the MB estimator can be as small as $4\%$.

\section{Appendix}

To prove Theorem 1, we only need to verify $\int_{ - \infty }^\infty  f_{\Theta _s}\left(\theta _s \right) F_{\Theta _r}\left(- {\theta _s} \right)d\theta _s=\frac{1}{2}$. Substituting (\ref{Theta_1_cdf}) into $\int_{ - \infty }^\infty  f_{\Theta _s}\left(\theta _s \right) F_{\Theta _r}\left(- {\theta _s} \right)d\theta _s$, we have
\begin{align}\label{Proof_1}
\nonumber
&\int_{ - \infty }^\infty  f_{\Theta _s}\left(\theta _s \right) F_{\Theta _r}\left(- {\theta _s} \right)d\theta _s\\ \nonumber
=&\!\int_{ - \infty }^0  \!\!\!f_{\Theta _s}\!\! \left(\theta _s \right)\frac{1}{1 \!+\! 10^{\frac{\theta _s}{10}}}d\theta _s\!\!+\!\!\int_{0 }^\infty \!\!\! f_{\Theta _s}\!\!\left(\theta _s \right)\frac{1}{1\! + \! 10^{\frac{\theta _s}{10}}}d\theta _s\\
\!=&\!\int_{ 0}^\infty  \!\!\! f_{\Theta _s}\!\!\left(-\theta _s \!\right)\frac{1}{1\! + \!10^{-\frac{\theta _s}{10}}}d\theta _s\!\!+\!\!\int_{0 }^\infty \!\!\! f_{\Theta _s}\!\!\left(\theta _s \!\right)\frac{1}{1 \!+ \!10^{\frac{\theta _s}{10}}}d\theta _s.
\end{align}

%=&\int_{ - \infty }^\infty  f_{\Theta _s}\left(\theta _s \right)\frac{1}{1 + 10^{\frac{\theta _s}{10}}}d\theta _s\\ \nonumber

From (\ref{Theta_2_pdf}), we observer that ${{f_{{\Theta _s}}}\left( {{\theta _s}} \right)}$ is an even function. Then, we have $f_{\Theta _s}\left(-\theta _s \right)=f_{\Theta _s}\left(\theta _s \right)$. Meanwhile, we have $\frac{1}{1 + 10^{-\frac{\theta _s}{10}}}=1-\frac{1}{1 + 10^{\frac{\theta _s}{10}}}$. Thus, (\ref{Proof_1}) can be rewritten as
\begin{align}\label{Proof_2}
\nonumber
&\int_{ - \infty }^\infty  f_{\Theta _s}\left(\theta _s \right) F_{\Theta _r}\left(- {\theta _s} \right)d\theta _s\\ \nonumber
=&\int_{ 0}^\infty \!\! \! f_{\Theta _s}\left(\theta _s \right)\left(\!1-\!\frac{1}{1 \!+ \!10^{\frac{\theta _s}{10}}}\!\right)d\theta _s\!+\!\int_{0 }^\infty  \!\!\!f_{\Theta _s}\left(\theta _s \right)\frac{1}{1 \!+ \! 10^{\frac{\theta _s}{10}}}d\theta _s\\ \nonumber
=&\int_{0 }^\infty  f_{\Theta _s}\left(\theta _s \right)d\theta _s\\
=&\frac{1}{2}.
\end{align}
Here, we complete the proof of Theorem 1.

%
%when ${m_{\frac{1}{2}}}=0$, there is
%\begin{equation}\label{a10}
%\Pr \left\{ {{\Gamma _1} + {\Gamma _2} = 0} \right\} = \frac{1}{2}
%\end{equation}
%we let
%\begin{equation}\label{a26}
%y\left( m \right) = \int_{ - \infty }^\infty  {{f_{{\Gamma _2}}}\left( {{\tau _2}} \right) \cdot } \frac{1}{{1 + {{10}^{ - \frac{{m - {\tau _2}}}{{10}}}}}}d{\tau _2}
%\end{equation}
%then,
%\begin{equation}\label{a27}
%y\left( 0 \right) = \int_{ - \infty }^\infty  {{f_{{\Gamma _2}}}\left( {{\tau _2}} \right) \cdot } \frac{1}{{1 + {{10}^{\frac{{{\tau _2}}}{{10}}}}}}d{\tau _2}
%\end{equation}

\end{document}